\newenvironment{remark}[1][Remark]{\begin{trivlist}
\item[\hskip \labelsep {\bfseries #1}]}{\end{trivlist}}
\newtheorem{theorem}{Theorem}
\newtheorem{lemma} {Lemma}
\begin{document}
\font\myfont=cmr12 at 20pt
\title{{ Coverage Analysis of 3-D Dense Cellular Networks with Realistic Propagation Conditions}}
\author{Aritra Chatterjee,~\IEEEmembership{Student Member,~IEEE,}   
         and Suvra Sekhar Das,~\IEEEmembership{Member,~IEEE}, 
\IEEEauthorblockA{Indian Institute of Technology, Kharagpur, India}}
\maketitle 
\begin{acronym}
 \acro{PPP}{Poisson Point Process}
 \acro{SPPP}{Spatial Poisson Point Process}
 \acro{UDN}{Ultra Dense Networks}
 \acro{AP}{Access Points}
 \acro{UE}{User Equipments}
 \acro{PL}{Path Loss}
 \acro{LOS}{Line of Sight}
 \acro{NLOS}{Non-Line of Sight}
 \acro{RV}{random variable}
 \acro{PDF}{probability density function}
 \acro{CCDF}{complimentary cumulative distribution function}
 \acro{CDF}{cumulative distribution function}
 \acro{NSE}{network spectral efficiency}
 \acro{NEE}{network energy efficiency}
 \acro{RAN}{Radio Access Network}
 \acro{SCN}{Small Cell Networks}
 \acro{PGFL}{Probability Generating Functional}
 \acro{SSF}{small scale fading}
 \acro{LSF}{large scale fading}
 \end{acronym}

\begin{abstract}
In recent times, use of stochastic geometry has become a popular and important tool for performance analysis of next generation dense small cell wireless networks. Usually such networks are modeled using 2 dimensional spatial Poisson point processes (SPPP). Moreover, the distinctive effects of line-of-sight (LOS) and non line-of-sight (NLOS) propagation are also not explicitly taken into account in such analysis. The aim of the current work is to bridge this gap by modeling the access point (AP) and user equipment (UE) locations by 3 dimensional SPPP and considering the realistic LOS/NLOS channel models (path loss and small scale fading) as reported in existing standards. The effect of UE density on downlink coverage probability has also been investigated. In this process, the probabilistic activity of APs has been analytically modeled as a function of AP and UE densities. The derived upper bound of coverage probability is found to be numerically simple as well as extremely tight in nature, and thus can be used as a close approximate of the same.  
\end{abstract} 
\begin{IEEEkeywords}
3-D Spatial Poisson Point Process (SPPP), LOS/NLOS, Coverage Probability, Stochastic Geometry.
\end{IEEEkeywords}
\section{Introduction}
Densification of \ac{AP} and decentralization of cellular network infrastructure has already been identified as potential weapon to cater the ever-increasing customer demands in future generation wireless \ac{RAN} \cite{andrews_2014}. In recent times, the locations of \ac{AP}s/\ac{UE} in such heterogeneous \ac{SCN} have been modeled as random \ac{SPPP} \cite{Elsawy_2013}. The performances of such networks has been analytically evaluated using tools from stochastic geometry.   \\
\indent In one of the earliest works, an information theoretic approach, named ``Wyner Model'' has been used where the \ac{AP} locations are modeled as 1-dimensional (1-D) point process \cite{Shamai_1997ab}. Although the method is analytically tractable, it is significantly oversimplified due to the 1-D \ac{AP} location modeling and assuming unit power received from each \ac{AP} to the user. In the seminal work \cite{Andrews_2011}, the \ac{AP} locations are modeled as 2-D \ac{SPPP}. The downlink coverage probability and ergodic rate experienced by a typical \ac{UE} have been evaluated using important tools of stochastic geometry like \ac{PGFL}. The coverage probability expression yields to extremely simple closed-form for specific special case like no-noise and path loss exponent equaling 4. However, the channel model used in this work is simple single slope \ac{PL} with exponential distributed \ac{SSF}. \\
\indent The 2-D modeling of \ac{AP} locations is based on the assumption that the inter-\ac{AP} distances and distances from the \ac{UE} and neighboring \ac{AP}s are much larger than the height at which the \ac{AP}s/\ac{UE}s are mounted. This assumption remains valid for moderately dense networks, but not in scenarios like dense urban or dense indoor hotspot and especially in case of \ac{UDN}. Consequently, in recently times, several works have been presented where \ac{SPPP}s of more than 2 dimensions are used \cite{Gupta_2015, Pan_2015, Omri_2016}. In \cite{Gupta_2015}, the \ac{SPPP} is considered to be defined in an upper hemispheric space ($\text{3D}^+$). Dual slope path loss model has been considered to capture distinctive propagation loss effects of \ac{LOS} and \ac{NLOS} cases. However, the \ac{SSF} has been modeled as exponentially distributed for both the cases, making the analysis simplified but not in line with practical scenario. In \cite{Pan_2015}, the downlink performances of a single tier \ac{SCN} modeled by 3-D \ac{SPPP} has been presented, whereas in \cite{Omri_2016}, the analysis is extended to multiple tiers of \ac{AP}s. In both the works, single slope \ac{PL} along with only exponential \ac{SSF} have been considered, which limits their applicability to understand the performance of dense urban/indoor scenarios where both \ac{LOS}/\ac{NLOS} propagation are prevalent. The notion of differentiated propagation loss in \ac{LOS} and \ac{NLOS} has been partially taken into account in \cite{Galiotto_2014, Galiotto_2015, Galiotto_2017} in a 2-D \ac{SPPP} modeled \ac{SCN}. Whereas \cite{Galiotto_2014} presented a simulation based study, effect of dual slope path loss corresponding to \ac{LOS} and \ac{NLOS} propagation has been captured analytically in \cite{Galiotto_2015, Galiotto_2017}. However, the \ac{SSF} in both kind of propagation is considered to be exponentially distributed, making the analysis oversimplified and far from realistic phenomenon. In recently published \cite{Atzeni_2018}, effect of \ac{LOS}/\ac{NLOS} propagation has been captured for \ac{SCN} modeled by 2-D \ac{SPPP} with a fixed \ac{AP} height. However, such analysis does not account for randomness in elevation domain and thus cannot be used to understand the performance ``multi-floor'' dense indoor scenarios. Moreover, the derived expression of coverage probability is numerically complex is nature and thus is not suitable for further analysis. 
So, to summarize, it can be observed that the existing literature either do not consider the random location modeling of network nodes in 3-D, or even if they do, do not properly take into consideration the realistic propagation losses in \ac{LOS} and \ac{NLOS} links.   \\
\indent Moreover, in the existing literature described above, the effect of \ac{UE} location distibution and \ac{UE} density have not been explicitly captured in evaluation of downlink system performance. Whereas, one of the most distinctive feature of future generation \ac{UDN} has been identified as comparable \ac{AP} and \ac{UE} density \cite{Kamel_2016}. Such system configuration results in inhomogeneous inactivity of some \ac{AP}s whose effect are yet to be captured in \ac{SCN} modeled by 3-D \ac{SPPP}. \\
\indent In light of the existing state-of-the-art and gaps therein, the main contributions of this work can be summarized as follows:
\begin{itemize}
 \item A holistic system model representing a dense co-channel urban/indoor \ac{RAN} has been developed to include the realistic propagation effects of SSF (LOS/NLOS), two-slope distance-dependent PL, interferer activity closely following realistic channel models described in \cite{3gpp36814} where the locations of both \ac{AP}/\ac{UE} are modeled as 3-D \ac{SPPP}.
 \item The activity of \ac{AP}s are analytically modeled as a function of \ac{AP} and \ac{UE} density for such a 3-D network (Lemma \ref{lemma_activ_prob}). 
 \item Numerically simpler upper and lower bounds of downlink coverage probability have been evaluated considering all above mentioned effects (Theorem \ref{theo_cov_prob_deriv}). The upper bound is found to be extremely tight and thus can be used as a close approximation of coverage probability. 
\end{itemize}

\section{System Model and Assumptions}
\label{sec:sys_mod}
\subsection{Network Model}
The cellular network model with small cell \ac{AP}s are considered to be arranged according to a homogeneous \ac{SPPP}  $\Propppaputh$ with intensity $\intpppaputh$ defined in $\mathbb{R}^3$. The \ac{UE}s are also considered to be distributed according to another homogeneous \ac{SPPP} $\Propppue$ with intensity $\intpppue$ defined in $\mathbb{R}^3$. The processes $\Propppaputh$ and $\Propppue$ are considered to be mutually independent in this work, whereas the dependence of these processes and its effect can be explored in future. Furthermore, The \ac{UE}s are considered to be associated with closest \ac{AP}s \cite{Andrews_2011}, thus resulting in the coverage regions becoming a $3$-D Voronoi tessellation.  \\
\indent As in this work $\intpppue$ is considered to be comparable or even less than $\intpppaputh$ which is a distinguishable property of \ac{UDN} \cite{Kamel_2016}, some \ac{AP}s can be in inactive mode without any \ac{UE} under their coverage. Let the activity probability of an \ac{AP} is denoted by $\actprob$. The value of $\actprob$ depends on the intensity functions of the SPPPs representing the spatial distributions of \ac{AP} and \ac{UE}, as derived in Lemma \ref{lemma_activ_prob}. 
\begin{lemma}
\label{lemma_activ_prob}
For the \ac{AP}s distributed according to $\Propppaputh (\subset \mathbb{R}^3)$ with intensity $\intpppaputh$ serving \ac{UE}s distributed according to $\Propppue (\subset \mathbb{R}^3)$ with intensity $\intpppue$ with `closest AP association rule', the activity probability ($\actprob$) is given by: 
\begin{equation}
\label{eq:act_prob_exp}
 \actprob = 1 - \bigg[1+\frac{\intpppue}{5\intpppaputh}\bigg]^{-5}.
\end{equation}
\end{lemma}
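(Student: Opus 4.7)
The plan is to reduce the activity indicator to a condition on the number of UEs in a typical Voronoi cell and then exploit the well-known gamma approximation for the volume distribution of a Poisson--Voronoi cell in $\mathbb{R}^3$.

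First I would observe that, under the closest-AP association rule, a tagged AP is active if and only if its Voronoi cell $\mathcal{C}$ (with respect to the process $\Propppaputh$) contains at least one point of the independent UE process $\Propppue$. Conditioning on the (random) volume $V = \mathrm{vol}(\mathcal{C})$ and invoking the independence of $\Propppaputh$ and $\Propppue$, the number of UEs falling in $\mathcal{C}$ is Poisson with mean $\intpppue V$, so
\begin{equation*}
\actprob \;=\; 1 - \mathbb{E}\bigl[\exp(-\intpppue V)\bigr].
\end{equation*}
Thus the lemma reduces to evaluating the Laplace transform of the typical cell volume of a 3-D Poisson--Voronoi tessellation at $s=\intpppue$.

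Next I would invoke the Ferenc--N\'eda gamma approximation for Poisson--Voronoi cell sizes: in $\mathbb{R}^d$ the normalized cell volume $Y = \intpppaputh V$ is well-approximated by a $\mathrm{Gamma}(\tfrac{3d+1}{2},\,\tfrac{3d+1}{2})$ random variable, which for $d=3$ gives shape and rate both equal to $5$ and hence unit mean (matching the exact mean $1/\intpppaputh$ of a typical cell). Using the closed-form Laplace transform of this gamma law,
\begin{equation*}
\mathbb{E}\bigl[e^{-s Y}\bigr] \;=\; \Bigl(\tfrac{5}{5+s}\Bigr)^{5},
\end{equation*}
and substituting $s = \intpppue/\intpppaputh$ yields $\mathbb{E}[e^{-\intpppue V}] = \bigl(1+\tfrac{\intpppue}{5\intpppaputh}\bigr)^{-5}$, which plugged back into the first display gives the claimed formula.

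The main obstacle is justifying the use of the gamma distribution for the 3-D Poisson--Voronoi cell volume, since no exact closed form is known; I would handle this by citing the standard approximation in the literature and noting that it matches the exact first moment and is known to be extremely accurate in higher moments, with the resulting expression then treated as the same kind of tractable approximation that the paper's subsequent coverage analysis will rest on. The remaining steps — the Poisson thinning argument for UEs inside a conditioned region and the Laplace transform of a gamma law — are routine and introduce no additional approximation error beyond the cell-volume model itself.
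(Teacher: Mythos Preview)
Your proposal is correct and follows essentially the same route as the paper: condition on the volume of the typical 3-D Poisson--Voronoi cell, use the Poisson void probability $\exp(-\intpppue V)$ for the independent UE process, and evaluate $\mathbb{E}[e^{-\intpppue V}]$ via the Ferenc--N\'eda $\mathrm{Gamma}(5,5)$ approximation for the normalized cell volume. The paper writes the gamma density explicitly and integrates, whereas you phrase it as a Laplace transform, but the argument is the same.
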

The proof of the lemma is provided in Appendix \ref{Appendix_lemma1}. \\
\indent Let $\Propppapth (\subset \mathbb{R}^3)$ represents the point process for the active \ac{AP}s. Although theoretically $\Propppapth$ is not a homogeneous \ac{SPPP} due to the fact that the  points from the process are not obtained independently of each other \cite[Sec. 3.5.2]{Galiotto_2017}, it has been shown in \cite{Lee_2012} that the homogeneous \ac{SPPP} obtained by thinning of $\Propppaputh$ can efficiently approximate $\Propppapth$. Thus, using Thinning theorem \cite{Chiu_2013book}, the resulting \ac{SPPP} $\Propppapth$ has intensity $\intpppapth = \actprob\intpppaputh$.    
\subsection{Channel Model}
\label{subsec:channel_mod}
In this work we consider that the \ac{AP}s transmit with single isotropic antenna with transmit power $\transPow$ to \ac{UE}s with single isotropic antenna. Use of multiple and directional antennas at both \ac{AP}s and \ac{UE}s can be seen as a potential future work.  
In this work, the links between the \ac{UE} and serving as well as interfering \ac{AP}s can be \ac{LOS} or \ac{NLOS} depending on the following distance-dependent LOS probability model as proposed by 3GPP for picocell environment \cite[Table A.2.1.1.2-3]{3gpp36814}:
\begin{equation}
 \label{eq:los_prob_exact}
 \plosex{d} = 0.5-\text{min}(0.5, 5e^{-\frac{0.156}{d}})+\text{min}(0.5, 5e^{-\frac{d}{0.03}}).
\end{equation}
Clearly, the expression does not offer tractability for mathematical analysis. Therefore, several simple approximations of \eqref{eq:los_prob_exact} have been proposed in literature \cite[Sec. II.B]{Galiotto_2015}. In this work, we use the following approximation which simultaneously offer close match with \eqref{eq:los_prob_exact} along with mathematical tractability: 
\begin{equation}
 \label{eq:los_prob_approx}
 \plosap{d} = \text{exp}{((-\frac{d}{L})^3)}, ~~~~~~~~~~~~~~\text{with}~~~ L = 82.5.
\end{equation}
In this work we consider the following distance-dependent two-slope \ac{PL} model in order to represent both \ac{LOS} and \ac{NLOS} links:
\begin{equation}
  PL(d)=\begin{cases}
    \PLlos{d} = K_L d^{\PLElos}, & \text{with prob. $\plosex{d}$} ;\\
    \PLnlos{d} = K_{NL} d^{\PLEnlos}, & \text{with prob. $1-\plosex{d}$},
  \end{cases}
\end{equation}
where, $\PLElos$ and $\PLEnlos$ represent path loss exponents in \ac{LOS} and \ac{NLOS} links respectively; $K_L$ and $K_{NL}$ represent the path loss at unit distance ($d=1\text{m}$) for \ac{LOS} and \ac{NLOS} links respectively.\\
\indent We represent the small scale power fading terms in \ac{NLOS} and \ac{LOS} links by $\ssfnlos$ and $\ssflos$ respectively. We assume that in \ac{NLOS} links, propagation is affected by Rayleigh fading, making $\ssfnlos \sim \exp{(1)}$. In \ac{LOS} links, the small scale fading is characterized by nakagami-$m$ distribution, making $\ssflos$ follow normalized Gamma distribution with shape parameter $\losGamShp$ ($\ssflos\sim \Gamma (\losGamShp, \frac{1}{\losGamShp})$).   \\
\indent Without lack of generality we consider a typical user $u$ located at origin. Full frequency reuse has been considered among all the \ac{AP}s. It has been assumed that the \ac{UE} is attached with an \ac{AP} situated at $x_0$, with other \ac{AP}s at locations $x_i$ acting as interferers ($x_i \in \Propppapth \forall i$). Considering an interference-limited scenario applicable for dense networks, the received $\sir$ (denoted by $\Gamma_u$) experienced by the \ac{UE} can be written as:
\begin{align}
 \label{eq:SIR_exp}
 &\Gamma_u =  \\ \nonumber
 & \frac{\losindi{r_{x_0}}\ssflos\PLlosinv{r_{x_0}}+[1-\losindi{r_{x_0}}]\ssfnlos \PLnlosinv{r_{x_0}}}{\sum_{x_i\in \Propppapth\textbackslash\{x_0\}}\losindi{r_{x_i}}\ssflos\PLlosinv{r_{x_i}}+[1-\losindi{r_{x_i}}]\ssfnlos \PLnlosinv{r_{x_i}}}.
\end{align}
where $r_{x_i} \triangleq ||x_i|| \forall i$ and $\losindi{.}$ representing \ac{LOS} indicator Bernoulli \ac{RV} with following probability mass function: $\mathbb{P}(\losindi{d}=1)=\plosex{d}$ and $\mathbb{P}(\losindi{d}=0)=1-\plosex{d}$. 
\section{Analysis of Network Performance and Main Results}\label{sec:analysis}
Let, from the target \ac{UE} $u$, the distance to the $n$-th nearest \ac{AP} is denoted by $r_n$. The distribution of $r_n$ follows the \ac{PDF} \cite[Sec. 3.2.2, Eq. (23)]{Moltchanov_2012}:
\begin{equation}
\label{eq:nth_AP_dist_pdf}
 \nAPdistpdf{n}{r_n} = \frac{3(\frac{4}{3}\pi\intpppapth r_n^3)}{r_n \Gamma(n)}\exp({-\frac{4}{3}\pi\intpppapth r_n^3)}, ~n=0, 1, 2 \cdots, r \geq 0.
\end{equation}
Putting $n=1$ in \eqref{eq:nth_AP_dist_pdf}, the \ac{PDF} of the distance from target \ac{UE} to the serving \ac{AP}, denoted by $r$ is given by:  
\begin{equation}
\label{eq:serv_dist_pdf}
 \servdistpdf{r} = 4\pi\intpppapth r^2\exp({-\frac{4}{3}\pi\intpppapth r^3)}, ~~~~~~~~~~r \geq 0.
\end{equation}
\begin{lemma}
\label{lemma_link_los_prob}
The probability that the link to the target \ac{UE} $u$ from the $n$-th nearest \ac{AP} belonging to $\Propppapth$ from it experiences \ac{LOS} condition can be approximated as: 
\begin{equation}
\label{eq:serv_link_los_prob}
 p^{n}_{LOS} \approx \big(\frac{4\pi\intpppapth L^3}{3+4\pi\intpppapth L^3}\big)^n.
\end{equation}
\end{lemma}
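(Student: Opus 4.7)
\emph{Proof sketch.} The plan is to compute $p_{LOS}^{n}$ by conditioning on the distance $r_n$ from the typical UE at the origin to the $n$-th nearest active AP, evaluating the LOS probability at that distance, and then de-conditioning through the nearest-neighbor PDF \eqref{eq:nth_AP_dist_pdf}. Because the exact LOS probability $\plosex{\cdot}$ in \eqref{eq:los_prob_exact} is analytically unwieldy, I would replace it with the tractable approximation $\plosap{d}=\exp(-(d/L)^3)$ from \eqref{eq:los_prob_approx}; this substitution is precisely the origin of the ``$\approx$'' in the statement.

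Concretely, the first step is to write
\begin{equation*}
p_{LOS}^{n} \;\approx\; \int_{0}^{\infty} \exp\!\bigl(-(r/L)^{3}\bigr)\, \nAPdistpdf{n}{r}\,dr .
\end{equation*}
Since both factors depend on $r$ only through $r^{3}$, I would then perform the change of variable $u=\tfrac{4}{3}\pi\intpppapth r^{3}$, under which the $n$-th nearest-neighbor PDF collapses to the standard Erlang density $\tfrac{u^{n-1}}{\Gamma(n)}\,e^{-u}$ on $(0,\infty)$, while the LOS weight becomes $\exp(-\alpha u)$ with $\alpha=\tfrac{3}{4\pi\intpppapth L^{3}}$.

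The computation then reduces to the elementary Gamma identity
\begin{equation*}
\int_{0}^{\infty} \frac{u^{n-1}}{\Gamma(n)}\, e^{-(1+\alpha)u}\,du \;=\; \frac{1}{(1+\alpha)^{n}},
\end{equation*}
and substituting back $1+\alpha = \tfrac{3+4\pi\intpppapth L^{3}}{4\pi\intpppapth L^{3}}$ yields exactly the claimed closed form.

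I do not expect a serious obstacle; the core of the argument is a single substitution that aligns the LOS stretched-exponential with the exponential tail of the nearest-neighbor PDF, after which the integral is immediate. The only subtlety worth flagging is that, because we are computing a \emph{marginal} probability for a single rank $n$, the argument does not require independence (or any joint law) of the LOS indicators $\losindi{\cdot}$ across different APs---only the one-dimensional distance distribution \eqref{eq:nth_AP_dist_pdf} of the $n$-th point of $\Propppapth$ is used.
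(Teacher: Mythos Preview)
Your proposal is correct and follows essentially the same route as the paper: condition on $r_n$, replace $\plosex{\cdot}$ by the tractable $\plosap{d}=\exp(-(d/L)^3)$, and reduce the resulting integral to a Gamma integral via a cubic change of variable. The only cosmetic difference is that you substitute $u=\tfrac{4}{3}\pi\intpppapth r^{3}$ (so that the Erlang law of $u$ is explicit and the LOS factor enters as $e^{-\alpha u}$), whereas the paper substitutes $v=r^{3}\bigl(\tfrac{4}{3}\pi\intpppapth+L^{-3}\bigr)$ and identifies the remaining integral with $\Gamma(n)$; both lead immediately to the same closed form.
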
 
\begin{proof}
From the definition of \ac{LOS} probability made in \eqref{eq:los_prob_exact}, probability that the link from the \ac{UE} to $n$-th nearest \ac{AP} is \ac{LOS} given that its length $r_n$ equals $\plosex{r}$. Now, using the property of conditional probability, it can be written that: 
\begin{equation}
\label{eq:nth_AP_los_prob_exact}
 p^{n}_{LOS} = \int_0^\infty \plosex{r_n} \nAPdistpdf{n}{r} {\rm d}r_n
\end{equation}
Using \eqref{eq:serv_dist_pdf} and invoking the approximation made in \eqref{eq:los_prob_approx}, \eqref{eq:nth_AP_los_prob_exact} can be rewritten as: 
\begin{align}
\label{eq:nth_AP_los_prob_approx}
 p^{n}_{LOS} &\approx \int_0^\infty \text{exp}{((-\frac{d}{L})^3)} \frac{3(\frac{4}{3}\pi\intpppapth r_n^3)}{r_n \Gamma(n)}\exp({-\frac{4}{3}\pi\intpppapth r_n^3)} {\rm d}r_n\\ \nonumber
 &= \frac{3(\frac{4}{3}\pi\intpppapth)^n}{(n-1)!}\int_0 ^\infty r_n^{3n-1}\text{exp}(-(\frac{4}{3}\pi\intpppapth+\frac{1}{L^3})r_n^3){\rm d}r_n.
\end{align}
Using the substitution: $v = 3r_n^2(\frac{4}{3}\pi\intpppapth+\frac{1}{L^3})$, \eqref{eq:nth_AP_los_prob_approx} can be rewritten as: 
\begin{align}
\label{eq:nth_AP_los_prob_substitution}
 p^{n}_{LOS} &= \frac{(\frac{4}{3}\pi\intpppapth)^n}{(n-1)!(\frac{4}{3}\pi\intpppapth+\frac{1}{L^3})}\int_0^\infty \big[\frac{v}{\frac{4}{3}\pi\intpppapth+\frac{1}{L^3}}\big]^{n-1}\text{exp}(-v){\rm d}v\\ \nonumber
 &= \frac{(\frac{4}{3}\pi\intpppapth)^n}{(n-1)!(\frac{4}{3}\pi\intpppapth+\frac{1}{L^3})^n}\underbrace{\int_0^\infty v^{n-1}\text{exp}(-v){\rm d}v }_\text{$=\Gamma(n)=(n-1)!$}.
\end{align}
Simple algebraic manipulations of \eqref{eq:nth_AP_los_prob_substitution} finally results in \eqref{eq:serv_link_los_prob}.
\end{proof}
\begin{remark}
The probability that the link between the target \ac{UE} and its serving \ac{AP} is given by: 
$p^{n}_{LOS} \approx \big(\frac{4\pi\intpppapth L^3}{3+4\pi\intpppapth L^3}\big)$. Putting $n=1$ in Lemma \ref{lemma_link_los_prob} gives this result.
\end{remark}
\subsection{Bounds of Coverage probability}
\indent The coverage probability ($\covprob$) experienced by \ac{UE} $u$ can be defined as: $\covprob \triangleq \mathbb{P}(\Gamma_u\geq \thres)$, where $\thres$ is a predefined $\sir$ threshold. Using the $\sir$ expression given in \eqref{eq:SIR_exp}, the average coverage probability experienced by the particular \ac{UE} $u$ is given by: 
\begin{align}
 \label{eq:covprob_deriv}
& \covprob = \Exptop{\mathbb{P}(\Gamma_u\geq \thres|r)}{r} = \int_{0}^{\infty} \mathbb{P}(\Gamma_u\geq \thres|r)\servdistpdf{r}{\rm d}r \\ \nonumber
 &= \int_0^{\infty} 4\pi\intpppapth r^2\exp({-\frac{4}{3}\pi\intpppapth r^3)} \big[\plosap{r}\mathbb{P}(\ssflos>\thres \interf K_{L}r^{\PLElos}|r)  \\ \nonumber
&~~~~~~~+(1-\plosap{r})\mathbb{P}(\ssfnlos>\thres \interf K_{NL}r^{\PLEnlos}|r) \big],
\end{align}
where $\interf$ denotes the aggregate interference power received at the \ac{UE}. 
Using the fact that $\ssfnlos\sim \exp(1)$, it can be written that
\begin{equation}
\label{eq:nlos_ssf_ccdf}
\mathbb{P}(\ssfnlos>\thres \interf K_{NL}r^{\PLEnlos}|r) = \interfLaplace{\thres \interf K_{NL}r^{\PLEnlos}},
\end{equation}
where $\interfLaplace{.}$ denotes the Laplace transform of $I$ and is defined as $\interfLaplace{s} \triangleq \Exptop{e^{-sI}}{I}$.    
\begin{lemma}
 \label{lemma_laplace_transform}
 In a 3-D network described in Sec. \ref{sec:sys_mod} in presence of both \ac{LOS} and \ac{NLOS} propagation, the Laplace transform of aggregate interference is given by: 
  \scriptsize{
 \begin{align}
  \label{eq:laplace_transform_expr}
  &\interfLaplace{s} = \exp\big[ -4\pi\intpppapth\int_{r}^\infty \big(1-\frac{1}{1+sK_{NL}^{-1}x^{-\PLEnlos}}\big)x^2{\rm d}x\big] \times \\ \nonumber
 & \exp\big[ -4\pi\intpppapth\int_{r}^\infty \plosap{x}\big(\frac{1}{1+sK_{NL}^{-1}x^{-\PLEnlos}}-\frac{1}{(1+\frac{sK_{L}^{-1}x^{-\PLElos}}{\losGamShp})^{\losGamShp}}\big)x^2{\rm d}x\big]
 \end{align}}
\end{lemma}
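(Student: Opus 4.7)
The plan is to start from the definition $\interfLaplace{s} \triangleq \Exptop{e^{-sI}}{I}$ and convert it into a product over the interferer PPP by exploiting independence across the three sources of randomness: (i) the active AP locations $\Propppapth$, (ii) the Bernoulli LOS indicators $\losindi{r_{x_i}}$, and (iii) the fading marks $\ssflos, \ssfnlos$. Since the serving AP is chosen as the closest point of $\Propppapth$ at distance $r$, conditioning on $r$ leaves a PPP of intensity $\intpppapth$ on the complement $\{x \in \mathbb{R}^3 : \|x\| > r\}$ (a standard consequence of the void probability / Slivnyak's theorem), so all interferers lie at distances $x > r$.

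Next, I would swap the expectation over the fading/LOS marks inside the product and apply the probability generating functional (PGFL) of the homogeneous PPP. For a generic interferer at $x$, the mark average gives
\begin{equation*}
f(x) \;=\; \plosap{x}\,\Exptop{e^{-s\ssflos K_L^{-1}\|x\|^{-\PLElos}}}{\ssflos} \;+\; \bigl(1-\plosap{x}\bigr)\,\Exptop{e^{-s\ssfnlos K_{NL}^{-1}\|x\|^{-\PLEnlos}}}{\ssfnlos}.
\end{equation*}
The NLOS term reduces to $1/(1+sK_{NL}^{-1}\|x\|^{-\PLEnlos})$ because $\ssfnlos\sim\exp(1)$, while the LOS term becomes $1/(1+sK_L^{-1}\|x\|^{-\PLElos}/\losGamShp)^{\losGamShp}$ using the MGF of the normalized Gamma law $\Gamma(\losGamShp,1/\losGamShp)$. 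PGFL then yields
\begin{equation*}
\interfLaplace{s} \;=\; \exp\!\Bigl(-\intpppapth \int_{\{\|x\|>r\}} \bigl[1-f(x)\bigr]\,{\rm d}x\Bigr),
\end{equation*}
and passing to spherical coordinates in $\mathbb{R}^3$ replaces the volume element by $4\pi x^2\,{\rm d}x$ with $x \in (r,\infty)$.

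The final step is a purely algebraic split of the integrand. Writing $A(x)$ and $B(x)$ for the LOS and NLOS fading transforms respectively, the identity
\begin{equation*}
1 - \plosap{x} A(x) - (1-\plosap{x}) B(x) \;=\; \bigl[1-B(x)\bigr] + \plosap{x}\bigl[B(x)-A(x)\bigr]
\end{equation*}
lets me factor $\interfLaplace{s}$ into the product of two exponentials, recovering exactly \eqref{eq:laplace_transform_expr}.

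I do not expect a serious obstacle: the mark theorem plus PGFL is routine once independence is justified. The only subtlety worth stating carefully is that, conditional on the nearest-AP distance $r$, the interferer process is a PPP of intensity $\intpppapth$ restricted to $\|x\|>r$, which legitimizes the lower limit $r$ in both integrals; and the use of the LOS-probability approximation $\plosap{\cdot}$ from \eqref{eq:los_prob_approx} (consistent with how Lemma~\ref{lemma_link_los_prob} was derived) rather than the exact $\plosex{\cdot}$.
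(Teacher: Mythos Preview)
Your proposal is correct and follows essentially the same route as the paper: average over the LOS indicator and fading marks per interferer, apply the PGFL of the 3-D PPP over $\{\|x\|>r\}$ (giving the $4\pi\intpppapth\int_r^\infty[1-f(x)]x^2\,{\rm d}x$ exponent), insert the exponential and normalized-Gamma MGFs, and then perform the algebraic split $1-\plosap{x}A-(1-\plosap{x})B=[1-B]+\plosap{x}[B-A]$. If anything, you are slightly more explicit than the paper about the conditioning via Slivnyak and about the use of the approximate $\plosap{\cdot}$.
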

 \normalfont
The proof of \eqref{eq:laplace_transform_expr} has been provided in Appendix \ref{Appendix_lemma3}.\\
As $\ssflos\sim \Gamma(\losGamShp, \frac{1}{\losGamShp})$, according to the derivation provided in \cite[Theorem 1]{Atzeni_2018}, it can be written that: 
\begin{equation}
 \label{eq:los_ssf_ccdf}
\mathbb{P}(\ssflos>\thres \interf K_{L}r^{\PLElos}|r) = \sum_{k=0}^{\losGamShp-1} \frac{(-s)^k}{k!} \kthordrdiff \interfLaplace{s}\bigg|_{s=\losGamShp\thres K_L r^{\PLElos}}. 
\end{equation}
Applying \eqref{eq:nlos_ssf_ccdf} and \eqref{eq:los_ssf_ccdf} in \eqref{eq:covprob_deriv} with the aid of \eqref{eq:laplace_transform_expr}, it can be clearly understood that the exact expression of $\covprob$ involves successive differentiation of $\interfLaplace{s}$, upto the order of $\losGamShp-1$, making it numerically intractable and cumbersome to evaluate. Therefore, in the following theorem, numerically simpler upper and lower bounds of $\covprob$ have been derived.
\begin{theorem}
 \label{theo_cov_prob_deriv}
 The coverage probability ($\covprob$) of atypical \ac{UE} in the downlink cellular network described in Sec \ref{sec:sys_mod} can be bounded as shown in \eqref{eq:cov_prob_bounds}, where $\zeta' = \losGamShp(\losGamShp!)^{-\frac{1}{\losGamShp}}$. 
\begin{figure*}
\small
 \begin{align}
  \label{eq:cov_prob_bounds}
  \int_0^\infty \bigg[ (1-\plosap{r})\interfLaplace{\thres K_{NL}r^{\PLEnlos}}&+\plosap{r}\sum_{l=1}^\losGamShp (-1)^{l+1}{\losGamShp \choose l} \interfLaplace{\losGamShp\thres K_L r^{\PLElos}l}\bigg] \servdistpdf{r} {\rm d} r ~~<~~ \covprob ~~<  \nonumber \\
  &\int_0^\infty \bigg[ (1-\plosap{r})\interfLaplace{\thres K_{NL}r^{\PLEnlos}}+\plosap{r}\sum_{l=1}^\losGamShp (-1)^{l+1} {\losGamShp \choose l} \interfLaplace{\zeta'\thres K_L r^{\PLElos}l}\bigg] \servdistpdf{r} {\rm d} r
 \end{align}
 \rule{0.97\textwidth}{0.5pt}\vspace{-0.7cm}
 \end{figure*}
 \end{theorem}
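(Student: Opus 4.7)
The plan is to start from the decomposition \eqref{eq:covprob_deriv} of $\covprob$, where the NLOS branch is already handled by \eqref{eq:nlos_ssf_ccdf}: since $\ssfnlos\sim\exp(1)$, its CCDF directly converts into $\interfLaplace{\thres K_{NL}r^{\PLEnlos}}$ upon taking the $\interf$-expectation. The intractability sits in the LOS branch, where \eqref{eq:los_ssf_ccdf} forces one to compute the first $\losGamShp-1$ derivatives of $\interfLaplace{s}$. My approach is to bypass those derivatives altogether by replacing the Gamma CCDF of $\ssflos$ with the classical Alzer-type sandwich inequality that is a polynomial in $e^{-(\cdot)}$, so that after an $\interf$-expectation each term becomes a mere evaluation of $\interfLaplace{\cdot}$ at a shifted argument.

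Concretely, since $\ssflos\sim\Gamma(\losGamShp,1/\losGamShp)$, writing $\ssflos=Y/\losGamShp$ with $Y\sim\Gamma(\losGamShp,1)$ and invoking Alzer's inequality gives, for every $y>0$,
\[
1 - \bigl(1 - e^{-\losGamShp y}\bigr)^{\losGamShp} \;<\; \mathbb{P}(\ssflos > y) \;<\; 1 - \bigl(1 - e^{-\zeta' y}\bigr)^{\losGamShp},
\]
with $\zeta' = \losGamShp(\losGamShp!)^{-1/\losGamShp}$. Setting $y=\thres\interf K_L r^{\PLElos}$ conditional on $r$ and $\interf$, and expanding by the binomial theorem, yields
\[
1 - \bigl(1 - e^{-a\thres\interf K_L r^{\PLElos}}\bigr)^{\losGamShp} = \sum_{l=1}^{\losGamShp} (-1)^{l+1}{\losGamShp \choose l} e^{-a l \thres K_L r^{\PLElos}\interf},
\]
with $a=\losGamShp$ for the lower bound and $a=\zeta'$ for the upper bound. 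Because the sum is finite, linearity of expectation applies term-by-term with no convergence concern, and each exponential $\mathbb{E}_{\interf}[e^{-al\thres K_L r^{\PLElos}\interf}]$ collapses to $\interfLaplace{a l \thres K_L r^{\PLElos}}$.

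The assembly step then weights the NLOS factor $\interfLaplace{\thres K_{NL} r^{\PLEnlos}}$ by $1-\plosap{r}$ and the new LOS sum by $\plosap{r}$, and integrates against $\servdistpdf{r}$ as prescribed by \eqref{eq:covprob_deriv}, producing exactly the two displayed sides of \eqref{eq:cov_prob_bounds}; the Laplace transform $\interfLaplace{\cdot}$ itself is supplied by Lemma \ref{lemma_laplace_transform}. The main obstacle I anticipate is the correct invocation of Alzer's inequality with the proper normalization of the Gamma variable and the right exponent constant $\zeta'$. Once this is pinned down, preservation of the inequality under $\mathbb{E}_\interf$ is immediate from linearity, and because $\zeta'\le\losGamShp$ while $\interfLaplace{\cdot}$ is monotone decreasing in $s$, the ordering in \eqref{eq:cov_prob_bounds} is automatically consistent with the direction of Alzer's bounds. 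No successive derivatives of $\interfLaplace{\cdot}$ are ever needed, which is precisely the numerical simplification the theorem asserts.
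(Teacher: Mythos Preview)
Your proposal is correct and follows essentially the same route as the paper: bound the CCDF of the normalized Gamma variable $\ssflos$ via Alzer's inequality, expand $1-(1-e^{-a y})^{\losGamShp}$ by the binomial theorem, and take the $\interf$-expectation term-by-term to obtain evaluations of $\interfLaplace{\cdot}$, then integrate against $\servdistpdf{r}$. Your added remark that $\zeta'\le\losGamShp$ together with the monotonicity of $\interfLaplace{\cdot}$ confirms the ordering in \eqref{eq:cov_prob_bounds} is a nice sanity check the paper leaves implicit.
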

\begin{proof}
For a normalized Gamma distributed random variable $X$ with shape parameter $\losGamShp $, its \ac{CDF} can be formulated as: 
\begin{equation}
 \label{eq:norm_gamma_dist_cdf}
 \mathbb{P}(X\leq x) = \frac{1}{\Gamma(\losGamShp)}\lincGammafunc{\losGamShp}{\losGamShp x},
\end{equation}
where $\lincGammafunc{.}{.}$ is the lower incomplete Gamma function defined as: 
\begin{equation}
 \label{eq:linc_gamma_func_defn}
 \lincGammafunc{s}{x} = \int_0^x t^{s-1}e^{-t}{\rm d}t. 
\end{equation}
From \cite[Theo. 1]{Alzer_1997} and using the fact that $\int_0^z e^{-t^p}{\rm d}t = \frac{1}{p}\lincGammafunc{\frac{1}{p}}{z^p}$, following inequality can be written: 
\begin{equation}
 \label{eq:alzer_ineq}
 [1-e^{-\zeta z^p}]^{1/p} < \frac{1}{p\Gamma(1+\frac{1}{p})}\lincGammafunc{\frac{1}{p}}{z^p} < [1-e^{-\alpha z^p}]^{1/p},
\end{equation}
where $\alpha = 1$ and $\zeta = [\Gamma(1+1/p)]^{-p}$ if $0<p<1$. Using the substitution $p = 1/\losGamShp $ (which ensures $p<1$, as for all practical scenarios, $\losGamShp >1$), \eqref{eq:alzer_ineq} can be rewritten as: 
\begin{equation}
 \label{eq:alzer_ineq_reformulated}
 [1-e^{-\zeta z^{1/\losGamShp }}]^{\losGamShp } < \frac{1}{\Gamma(\losGamShp )}\lincGammafunc{\losGamShp }{z^{1/\losGamShp }} < [1-e^{-\alpha z^{1/\losGamShp }}]^{\losGamShp }
\end{equation}
Finally, using the substitution $z^{1/\losGamShp } = \losGamShp x$ and using \eqref{eq:norm_gamma_dist_cdf}, \eqref{eq:alzer_ineq_reformulated} can be rewritten as: 
\begin{align}
 \label{eq:norm_gamma_cdf_ineq}
 [1-e^{-\zeta \losGamShp x}]^{\losGamShp } &< \mathbb{P}(X\leq x) < [1-e^{-\alpha \losGamShp x}]^{\losGamShp }, \nonumber \\
 \implies 1-[1-e^{- \losGamShp x}]^{\losGamShp } &< \mathbb{P}(X\geq x) < 1-[1-e^{-\zeta \losGamShp x}]^{\losGamShp}
 \end{align}
Putting \eqref{eq:norm_gamma_cdf_ineq} in \eqref{eq:covprob_deriv} due to the fact that $\ssflos$ is a normalized Gamma function with shape parameter $\losGamShp$ and using Binomial theorem, \eqref{eq:cov_prob_bounds} can be obtained.
\end{proof}
In the later sections, it has been shown that the upper bound provided in \eqref{eq:cov_prob_bounds} is extremely tight and thus can be used as a close approximate of $\covprob$. 
\section{Numerical Results and Discussions}
\begin{table}[h]
\caption{\small{Key system parameters}}
\centering
\begin{tabular}{|m{4.5cm}|m{3cm}|}\hline
\textbf{Parameter} & \textbf{Value}\\ \hline
Intensity of $\Propppaputh$ ($\intpppaputh$, in $m^{-3}$) & $10^{-6}-10^{2}$  \\ \hline
Intensity of $\Propppue$ ($\intpppue$, in $m^{-3}$) & $10^{-4}, 10^{-2}, 10^{0}$  \\ \hline
Carrier frequency & 2 GHz  \\ \hline
\multirow{2}{*}{\ac{PL} parameters in LOS link \cite{3gpp36814}} & $K_L = 10^{4.11}$ \\ \cline{2-2}
&  $\PLElos= 2.09$ \\ \hline
\multirow{2}{*}{\ac{PL} parameters in NLOS link \cite{3gpp36814}} & $K_{NL} = 10^{3.29}$ \\ \cline{2-2}
&  $\PLEnlos= 3.75$ \\ \hline
Shape factor of Gamma distribution in LOS links ($\losGamShp$) & $3$ (assumed)  \\ \hline
$\sir$ threshold for coverage ($\thres$) & $-10$ dB (assumed)  \\ \hline
\end{tabular}
\label{table_SystemParameter}
\end{table}
In this section we show the accuracy of the analytical results obtained in Sec. \ref{sec:analysis} by comparing them with the ones obtained from Monte-Carlo simulation. The path loss parameters for \ac{LOS} and \ac{NLOS} links are chosen by following realistic channel models prescribed in \cite[Table A.2.1.1.2-3]{3gpp36814} and listed in Table \ref{table_SystemParameter} along with other key system parameters. It is worth noting that for LOS links, no typical value of $\losGamShp$ has been provided in measurement campaigns. In this work we consider $\losGamShp$ = 3, although the analytical framework is generic to accommodate any other value of $\losGamShp$ as well. The $\sir$ threshold for coverage ($\thres$) is assumed to be $-10$ dB, however the work is generic to accommodate any other value. 
\subsection{Validation of Lemma \ref{lemma_activ_prob}}
\begin{figure}[h]
   \centering
\includegraphics[width= \linewidth]{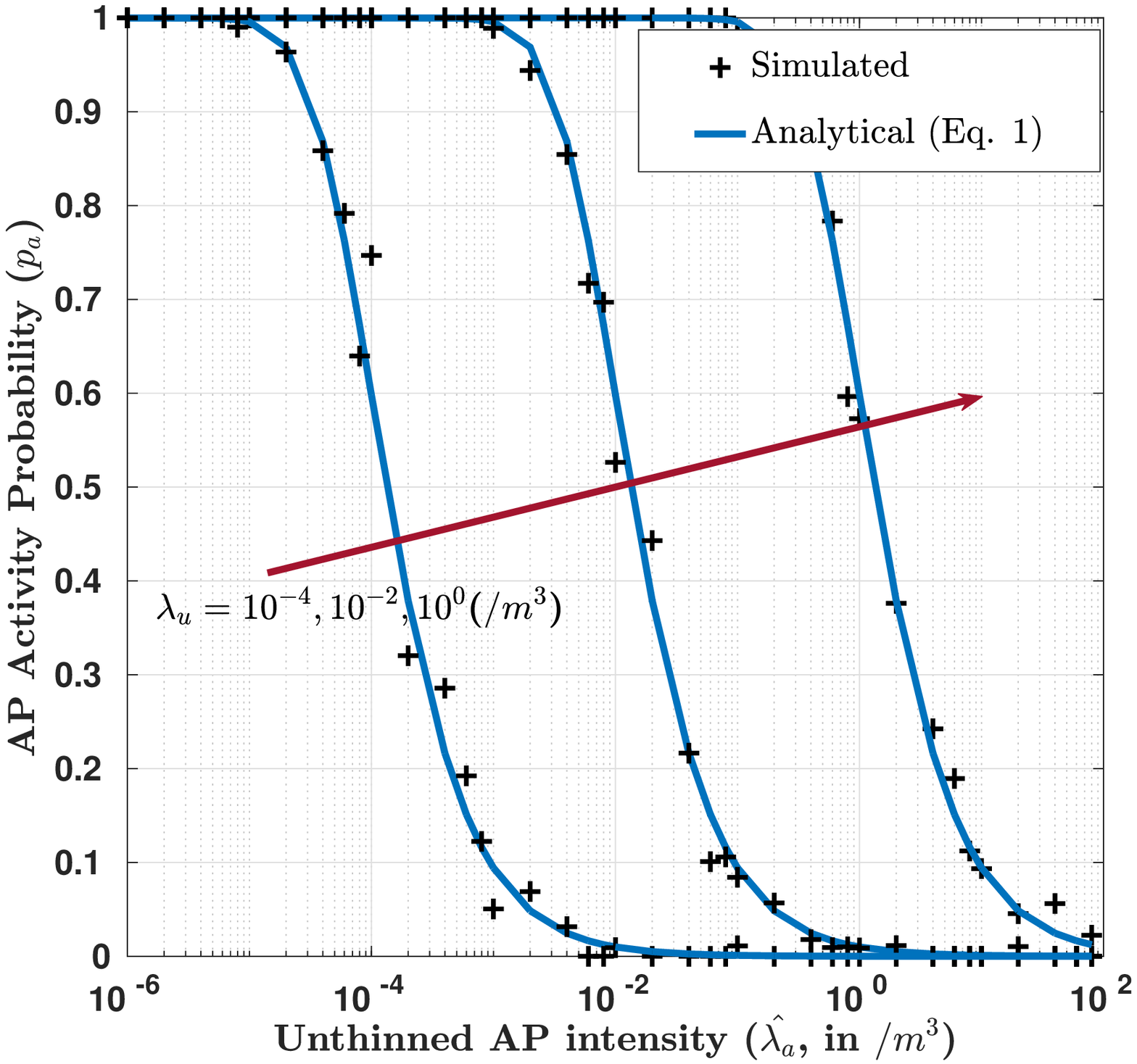}
   \caption{\small{Variation of \ac{AP} activity probability ($\actprob$) w.r.t. varied AP intensity ($\intpppaputh$) for various UE intensity ($\intpppue$).}}
   \label{fig:lemma_1}
\end{figure}
In Fig \ref{fig:lemma_1} the accuracy of Lemma \ref{lemma_activ_prob} has been shown in comparison with results obtained from Monte-Carlo simulation for a wide range of $\intpppaputh$ and $\intpppue$. The obtained closed match between the simulated and analytically obtained curves shows the usability of Eq. \eqref{eq:act_prob_exp} for further analysis. The nature of activity probability ($\actprob$) curves are intuitive in nature. For a given \ac{UE} intensity ($\intpppue$), with increase in \ac{AP} intensity ($\intpppaputh$), $\actprob$ decreases, due to the fact that for each \ac{UE}, number of candidate \ac{AP}s increases, therefore resulting in inactivity of some \ac{AP}s. On the other hand, for a given $\intpppaputh$, with increase in $\intpppue$ (from $10^{-4}$ to $10^0/m^{3}$), $\actprob$ increases due to the fact that in order to cater to increasing service deman, more and more \ac{AP}s need to be kept in `active' state.
\subsection{Validation of Lemma \ref{lemma_link_los_prob}}
\begin{figure}[h]
   \centering
   \includegraphics[width= \linewidth]{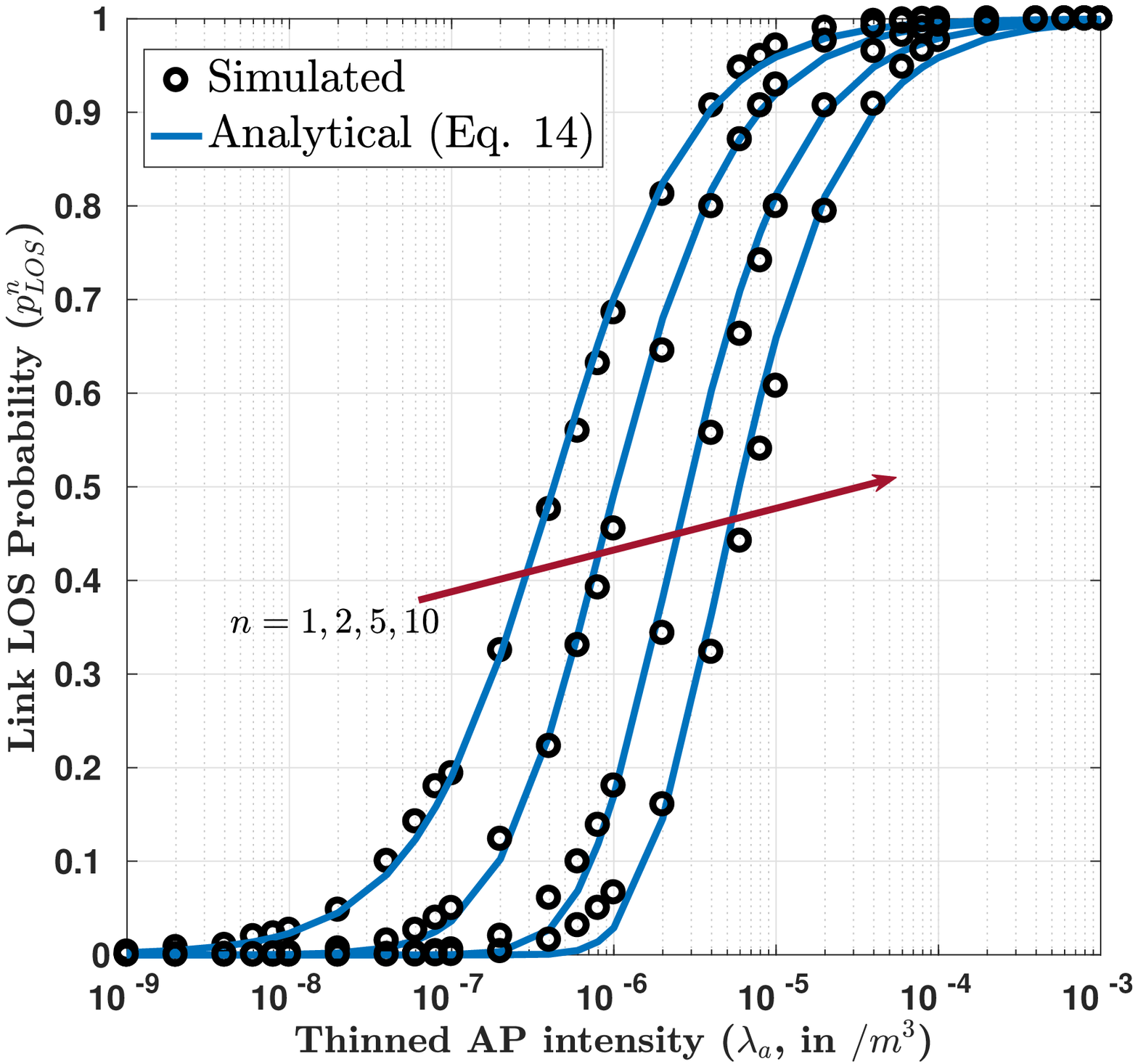}
   \caption{\small{Variation of link \ac{LOS} probability ($p^n_{LOS}$) w.r.t varies $\intpppapth$ for different $n$.}}
   \label{fig:lemma_2}
\end{figure}
The analytical correctness of Lemma \ref{lemma_link_los_prob} has been exhibited in Fig. \ref{fig:lemma_2}, where the numerical values of $p^{n}_{LOS}$ have been compared to the ones obtained from simulation for a wide range of $\intpppapth$ and $n$. The observed close match between the siulated and analytical results proves the correctness of Eq. \eqref{eq:serv_link_los_prob}. By analyzing the \ac{LOS} probability curves, it can be seen that for any specific $n$, with increase in $\intpppapth$, probability that the link becomes \ac{LOS} increases. This is primarily due to the fact that with more densification of \ac{AP}s, expected length of such links reduces, resulting in increase in $p^{n}_{LOS}$. On the other hand for a specific $\intpppapth$, $p^{n}_{LOS}$ decreases with increase in $n$ due to the fact that the expected length increases with increase in $n$, resulting in decrease in LOS probability. Note that in Fig. \ref{fig:lemma_2}, the curve of $p^{n}_{LOS}$ has been shown for an extended range of $\intpppapth$ ($10^{-9}$-$10^{-1}/m^{3}$). This is due to show the accuracy of \eqref{eq:serv_link_los_prob} for the whole range of $p^n_{LOS}$ (0-1). 
\subsection{Validation of Theorem \ref{theo_cov_prob_deriv} and Main Result}
\begin{figure}[h]
   \centering
   \includegraphics[width= \linewidth]{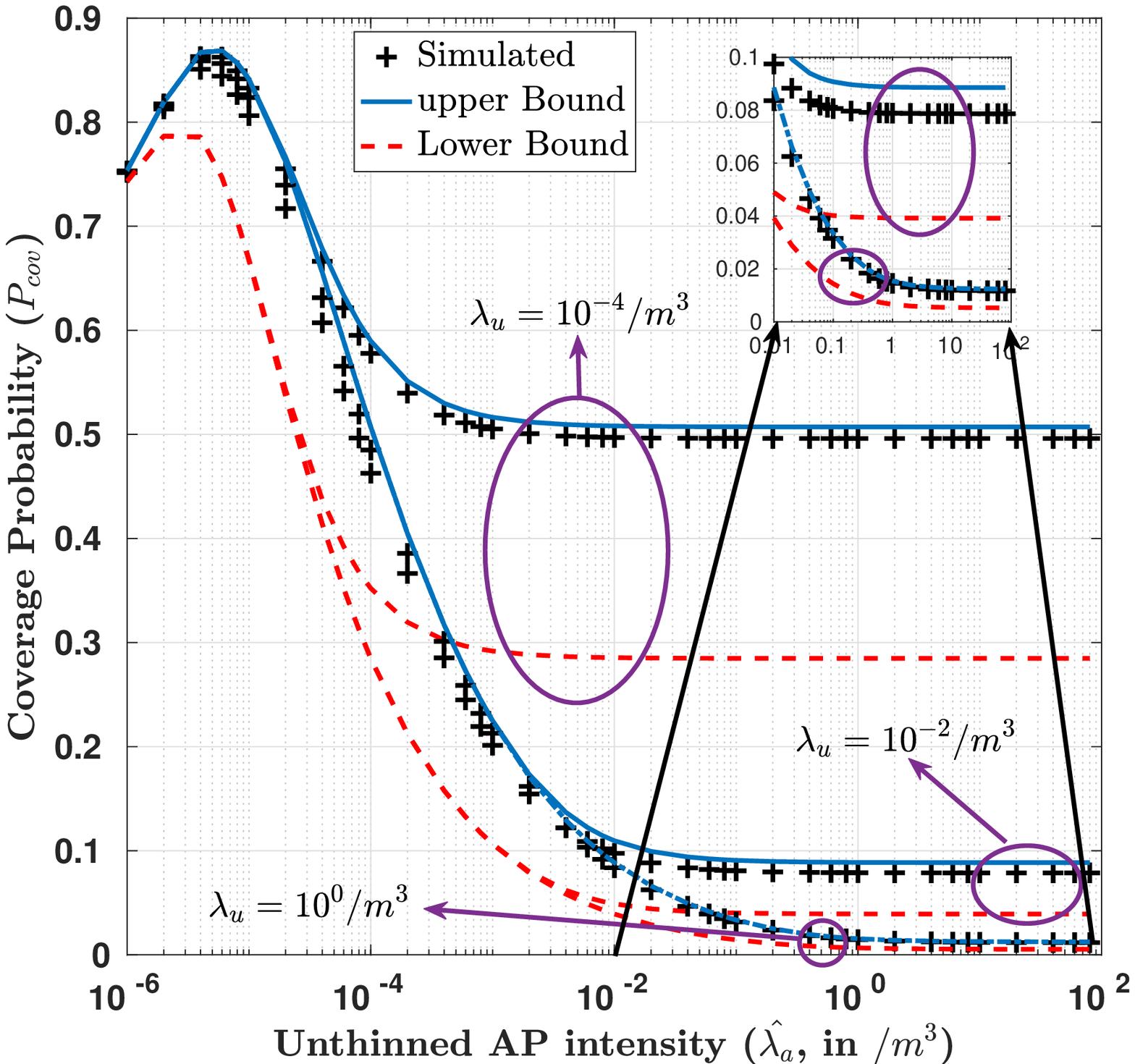}
   \caption{\small{Simulated and analytical bounds of coverage probability ($\covprob$), as in \ref{eq:cov_prob_bounds}) w.r.t. \ac{AP} intensity ($\intpppaputh$) for various \ac{UE} intensity ($\intpppue$).}}
   \label{fig:theo_1}
\end{figure}
The analytical acceptance of the derived upper and lower bounds of downlink coverage probability obtained in Theorem \ref{theo_cov_prob_deriv} has been shown in Fig. \ref{fig:theo_1} by comparing them with results obtained from Monte-Carlo simulation for a wide range of \ac{AP} intensity ($\intpppaputh$) for low, moderate and high \ac{UE} intensity ($\intpppue$). By observing the result, it can be inferred that the theoretical upper bound is extremely tight in nature for the whole range of system parameters. However, the derived lower bound is loose in nature. \\
\indent The plotted curves of $\covprob$ also reveal important impact of \ac{LOS}/\ac{NLOS} propagation. For the lower range of $\intpppaputh$, we observe an increasing trend of $\covprob$ until it reaches a maximum level. In this range of $\intpppaputh$, the link between the \ac{UE} and the serving (nearest) \ac{AP} is in \ac{LOS} condition. Whereas, most of the interferer links are in \ac{NLOS} condition. After reaching the maximum limit, the coverage probability decreases monotonically with further increase in $\covprob$, due to the fact that the interfering \ac{AP}s also start to enter \ac{LOS} zone, thus increasing the aggregate interference power at a higher rate than the rate at which the desired power increases. Finally, at further higher range of $\intpppaputh$, the coverage probability remains broadly unchanged, due to the fact that in such scenarios, all the desired and interfering \ac{AP}s reside in \ac{LOS} zone, and adding more and more \ac{AP}s does not affect much in the resulting $\sir$.
It has also been observed that with increase in $\intpppue$ (from $10^{-4}$ to $10^0$), coverage probability decreases, due to the fact as \ac{UE} density increases, the \ac{AP} activity probability increases, resulting in increase in number of interfering \ac{AP} and subsequently aggregate interference power, which finally reduces $\sir$. 
\section{Conclusions}
\label{sec:conclusions}
In this work, analytical upper and lower bounds of downlink coverage probability experienced by a typical \ac{UE} in a 3-D heterogeneous \ac{SCN} described by homogeneous \ac{SPPP}. In this analysis, the holistic effects of \ac{LOS}/\ac{NLOS} propagation in terms of realistic link \ac{LOS} probability, two-slope \ac{PL} along with \ac{SSF} have been considered. The derived upper bound is shown to be numerically simpler as well as extremely tight in nature compared to the exact simulated coverage probability and thus can be used as a close approximate of the same. Thus, the presented result can further be used in solving emerging \ac{RAN} design issues like area spectral/energy efficiency evaluation/maximization, \ac{AP} density optimization etc.     
\appendices
\begin{figure*}
\small
\begin{align}
\label{eq:laplace_transform_deriv}
 \interfLaplace{s} &= \Exptop{e^{-sI}}{I} = \Exptop{\exp\big(-s\sum_{x_i\in\Propppapth\textbackslash\{x_0\}}
 \losindi{r_{x_i}}\ssflos\PLlosinv{r_{x_i}}+[1-\losindi{r_{x_i}}]\ssfnlos \PLnlosinv{r_{x_i}}\big)}{(\Propppapth, \losindi{.},\ssflos, \ssfnlos)}  \nonumber \\
 &= \Exptop{\prod_{x_i\in\Propppapth\textbackslash\{x_0\}}\big(\plosap{r_{x_i}} \Exptop{\exp\big(-s \ssflos K_L^{-1} r_{x_i}^{-\PLElos}\big)}{\ssflos}+(1-\plosap{r_{x_i}}) \Exptop{\exp\big(-s \ssfnlos K_{NL}^{-1} r_{x_i}^{-\PLEnlos}\big)}{\ssfnlos}}{\Propppapth}  \nonumber \\ 
 &\stackrel{(a)}{=} \exp(-4\pi\intpppapth\int_r^{\infty}[1-\Exptop{\exp(-s\ssfnlos K_{NL}^{-1}x^{-\PLEnlos})}{\ssfnlos} x^2 {\rm d}x])  \nonumber \\
 &\times \exp(-4\pi\intpppapth\int_r^{\infty}\plosap{x}[(\Exptop{\exp(-s\ssfnlos K_{NL}^{-1}x^{-\PLEnlos})}{\ssfnlos} - \Exptop{\exp(-s\ssflos K_{L}^{-1}x^{-\PLElos})}{\ssflos}) x^2 {\rm d}x])
\end{align}
\rule{0.97\textwidth}{0.5pt}
\end{figure*}
\section{Proof of Lemma \ref{lemma_activ_prob}} \label{Appendix_lemma1}
For \ac{SPPP} $\Propppaputh$, the Voronoi cells' volume ($v$) distribution can be approximated as \cite[Eq. 11]{Ferenc_2007}
\begin{equation}
\label{voronoi_cell_volume_dist}
 f_{3D}(v) = \frac{5^5}{\Gammafunc{5}}\intpppaputh^5v^4\text{exp}(-5v\intpppaputh). 
\end{equation}
From the property of \ac{UE} SPPP $\Propppue$, no \ac{UE} will lie in a Voronoi cell of volume $v$ with probability $\text{exp}(-v\intpppue)$. So, the average void probability in 3D SPPP ($\voidprob$) is given by: 
 \begin{align}
  &\voidprob = \int_{0}^{\infty} \text{exp}(-v\intpppue)f_{3D}(v) {\rm d} v   \\ \nonumber
	    &= \frac{(5\intpppaputh)^5}{\Gammafunc{5}} \int_{0}^{\infty} v^4 \text{exp}(-v(5\intpppaputh+\intpppue))=\bigg[1+\frac{\intpppue}{5\intpppaputh}\bigg]^{-5}.
  \label{voronoi_void_prob_eval}
 \end{align}
Thus, the activity probability ($\actprob$) of \ac{AP}s (i.e. probability that at least one \ac{UE} will lie in the Voronoi cell and thus be under its coverage) can be written as: 
\begin{equation}
  \actprob = 1-\voidprob = 1 - \bigg[1+\frac{\intpppue}{5\intpppaputh}\bigg]^{-5}.
  \label{activity_prob_eval}
 \end{equation}
 \section{Proof of Lemma \ref{lemma_laplace_transform}} \label{Appendix_lemma3}
From the definition of Laplace transform of $I$, $\interfLaplace{s}$ can be expressed as \eqref{eq:laplace_transform_deriv} at the top of the next page, where step (a) follows from the \ac{PGFL} of 3-D \ac{SPPP} \cite{Chiu_2013book, Pan_2015, Omri_2016}, which states that for some function $f(x)$, $\Exptop{\Pi_{x\in \Propppapth\textbackslash\{x_0\}}f(x)}{\Propppapth} = \exp \big( -4\pi\intpppapth\int_{r}^{\infty}[1-f(x)]x^2{\rm d}x\big)$ and rearranging the terms.
Recalling that $\ssfnlos\sim \exp(1)$ and $\ssflos\sim \Gamma(\losGamShp, \frac{1}{\losGamShp})$, it can be written that:
 \begin{align}
 \label{eq:los_nlos_ssf_mgf}
  &\Exptop{\exp(-s\ssfnlos K_{NL}^{-1}x^{-\PLEnlos})}{\ssfnlos} = \frac{1}{1+sK_{NL}^{-1}x^{-\PLEnlos}}, \nonumber \\ 
 &\Exptop{\exp(-s\ssflos K_{L}^{-1}x^{-\PLElos})}{\ssflos} = \frac{1}{(1+\frac{sK_{L}^{-1}x^{-\PLElos}}{\losGamShp})^\losGamShp}.
\end{align} 
Putting \eqref{eq:los_nlos_ssf_mgf} in \eqref{eq:laplace_transform_deriv}, \eqref{eq:laplace_transform_expr} can be readily obtained. 
\bibliographystyle{IEEEtran}
\bibliography{ref_list_udn_ppp} 
\end{document}